\def\md{\mathrm{d}}
\newtheorem{lemma}{Lemma}
\begin{document}
\title{Quadratic Quantum Speedup for Finding Independent Set of a Graph}

\author{Xianjue Zhao}
\affiliation{International Center for Quantum Materials, School of Physics, Peking University, Beijing 100871, China}
\author{Peiyun Ge}
\affiliation{State Key Laboratory of Low Dimensional Quantum Physics, Department of Physics, Tsinghua University, Beijing 100084, China}
\author{Li You}
\affiliation{State Key Laboratory of Low Dimensional Quantum Physics, Department of Physics, Tsinghua University, Beijing 100084, China}
\affiliation{Frontier Science Center for Quantum Information, Tsinghua University, Beijing 100084, China}
\affiliation{Beijing Academy of Quantum Information Sciences, Beijing 100193, China}
\affiliation{Hefei National Laboratory, Hefei 230088, China}
\author{Biao Wu}
\email{wubiao@pku.edu.cn}
\affiliation{International Center for Quantum Materials, School of Physics, Peking University, Beijing 100871, China}
\affiliation{Hefei National Laboratory, Hefei 230088, China}
\affiliation{Wilczek Quantum Center, Shanghai Institute for Advanced Studies,
Shanghai 201315, China}

\date{\today}

\begin{abstract}
A quadratic speedup of the quantum adiabatic algorithm (QAA) for finding independent sets (ISs) in a graph is proven analytically. In comparison to the best classical algorithm with \(O(n^2)\) scaling, where $n$ is the number of vertexes, our quantum algorithm achieves a time complexity of \(O(n^2)\) for finding a large IS, which reduces to \(O(n)\) for identifying a size-2 IS. The complexity bounds we obtain are confirmed numerically for a specific case with the \(O(n^2)\) quantum algorithm outperforming the classical greedy algorithm, that also runs in \(O(n^2)\). The definitive analytical and numerical evidence for the quadratic quantum speedup benefited from an analytical framework based on the Magnus expansion in the interaction picture (MEIP), which overcomes the dependence on the ground state degeneracy encountered in conventional energy gap analysis. In addition, our analysis links the performance of QAA to the spectral structure of the median graph, bridging algorithmic complexity, graph theory, and experimentally realizable Rydberg Hamiltonians. The understanding gained provides practical guidance for optimizing near-term Rydberg atom experiments by revealing the significant impact of detuning on blockade violations.

\end{abstract}

\maketitle

{\it Introduction.}---
Shor's quantum algorithm for factoring~\cite{Shor}, discovered over two decades ago, catalyzed the rapid development of quantum computing. Its quantum advantage, rigorously proven, addressed practical problems and dispelled any doubts about quantum supremacy. Since then, substantial efforts have focused on identifying other quantum algorithms with proven advantages for real-world problems. However, few have been found and rigorously demonstrated, notable exceptions include the HHL algorithm for solving linear systems~\cite{hhl} and Grover's search algorithm~\cite{Grover}.

In 2000, Farhi et al. introduced a new paradigm: quantum adiabatic algorithms (QAA). While computationally equivalent to gate-based quantum algorithms, QAA offers distinct advantages. Notably, it encodes problem constraints directly into a Hamiltonian, thus allowing physical intuition from our extensive experience with quantum mechanics to be leveraged for optimization. Moreover, certain QAAs have recently become implementable with the rapid development of quantum computing hardware, providing a promising path to realize meaningful quantum computation tasks. For example, a QAA for solving the NP-hard maximum independent set (MIS) problem has been experimentally demonstrated on Rydberg atom arrays~\cite{Lukin2022, You, Lukin2017, Lukin2023, Lukin2024, Endres, wupra, yucpl, nsr}.

Despite these advances, many QAA variants, such as the quantum approximate optimization (QAOA) and variational quantum algorithms (VQA)~\cite{adiagrover, QAA, QAOA, AQCRMP}, are yet to demonstrate definitive advantage over classical algorithms for practical problems, either analytically or through convincing numerical simulations. The only exception is the QAA for unstructured search, which by now is known to be capable of matching Grover's algorithm~\cite{Grover}, yielding a quadratic speedup over classical algorithms.

The difficulties in analyzing the complexity of QAA stem from two main challenges. First, the energy gap, which governs the time complexity of QAA, is in general notoriously difficult to compute~\cite{trivialbound, JRS, AQCRMP, YouCountGSD}. Second, classical numerical simulations are limited to relatively small systems, typically only a few dozen of qubits, and thus are insufficient to provide clear demonstrations of quantum advantage.

This work analyzes a QAA designed to find independent sets (ISs) in graphs, that was proposed earlier~\cite{wupra, yucpl, nsr}. The Hilbert space in this QAA is naturally divided into two subspaces: the independent sets, which correspond to ground states, and the non-independent sets, which are excited states. Despite the constant energy gap between these two subspaces in the interaction picture, which removes the need for instance-specific minimum-gap estimates, the complexity of the algorithm remains difficult to analyze due to the highly degenerate eigenspaces associated with ISs. This is essentially  the regime where conventional gap analysis fails.

Our work is made possible with the development of the Magnus expansion in the interaction picture (MEIP)~\cite{Magnus}. As we report below, such an alternative analytical framework is sufficiently powerful to facilitate direct control over transition amplitudes. Unlike gap analysis, we are able to rigorously bound leakage into non-independent sets, hence provide an exact method for analyzing the complexity of the algorithm. Applied to the QAA for finding ISs, we prove that the quantum algorithm finds a large independent set in $O(n^{2})$ time. Remarkably, for the specific task of identifying a size-2 independent set, the runtime reduces to $O(n)$, achieving a quadratic speedup over the fastest classical $O(n^{2})$ algorithm. Hence this work provides a rigorous, complexity-theoretic proof of quantum advantage for a combinatorial optimization problem.

Beyond its theoretical significance, our results also have direct implications for quantum annealing experiments. Our protocol is deterministic, requires no oracles, and is naturally adapted to Rydberg atom array platforms~\cite{You}. As we will show below, our analysis also explains recent experimental observations of blockade violations, and with the insights gained from MEIP we now understand that these violations are not only due to limited runtime but also to excessive detuning, with the latter being a significant contributor to leakage, highlighting detuning as a crucial experimental parameter. Therefore, our framework establishes a provable quantum speedup and offers concrete guidance for optimizing near-term Rydberg-atom-based quantum annealing experiments.

{\it The Hamiltonian of the algorithm.}---
We consider the standard Rydberg-atom encoding of a graph with $n$ vertices, the time-dependent Hamiltonian takes the form  
\begin{align}
  \label{Hamiltonian}
  H_0(s) &= \tfrac{1}{2}\Omega(s)T\sum_j \sigma_j^x
  +\left[-\Delta(s)T\sum_j {n}_j 
  + \omega_0 T \sum_{\langle i,j\rangle}{n}_i{n}_j\right] \nonumber \\
  &\equiv H_{1}+H_{2}\,,
\end{align}
where $H_{1}=\frac{\Omega(s)T}{2}\sum_j \sigma_j^x$, $s\equiv t/T\in[0,1]$ is the scaled time, with respect to the total runtime $T$, $\Omega(s)$ the Rabi frequency, and $\Delta(s)$ the detuning. The operator ${n}_j = (\sigma_j^z+1)/2$ projects onto the Rydberg state of site $j$ with $\sigma_j^z$ counting the difference of a single atom in the Rydberg and ground states, while the blockade strength between connected vertices is assumed to be constant $\omega_0$. Throughout we take $\hbar=1$. 

In the above discussed setup, the computational basis $\{\ket{\phi_1},\ldots,\ket{\phi_N}\}$ with $N=2^n$ naturally corresponds to all allowed classical vertex configurations, and the second term of $H_{2}$ counts the number of violated edges. The Rydberg blockade confines subsequent dynamics within the independent-set (IS) subspace. By choosing $\Omega(0)=\Omega(1)=0$, with $\Delta(0)<0$ and $\Delta(1)>0$, the ground state thus is shown to interpolate between the empty set at $s=0$ and the maximum independent set (MIS) at $s=1$. An adiabatic sweep starting from the vacuum state in the ideal case would prepare a MIS, according to the basic principle of quantum annealing on the Rydberg array platform.

{\it Magnus expansion in the interaction picture.}---
To analyze the dynamics rigorously in more details, we introduce the Magnus expansion in the interaction picture (MEIP). This framework transforms the graph Hamiltonian to the interaction picture by setting $U_{I}(s) = \mathcal{T}e^{-i\int_0^s{{H}}_{2}(s')\mathrm{d}s'}$ to arrive at $\ket{\Phi_{I}(s)}=U_{I}^\dagger(s)\ket{\Phi_S(s)}$ defined by $H_{I}=U_{I}^\dagger H_{1} U_{I}$. With $n_m=\braket{\phi_m|\sum_i n_i|\phi_m}$ denoting the number of excited vertices, we can calculate $U_{I}=\sum_{m=1}^{N}e^{-i(E_m s - n_m T\int_0^s\Delta(s)\mathrm{d}s')}\ket{\phi_m}\bra{\phi_m}$, where $E_m=N_e(m)\omega_0T$ and $N_e(m)$ denotes the number of edges in $\ket{\phi_m}$. We note that $U_{I}(s)$ is a diagonal matrix whose elements are complex numbers with modulus 1. Thus, assuming $\Phi_{I}(s)=\sum_{i}c_i'(s)\ket{\phi_i}$ and $\Phi_{S}(s)=\sum_{i}c_i(s)\ket{\phi_i}$, we find $|c_i'(s)|^2=|c_i(s)|^2$, indicating that we can study $H_{I}$ and $\Phi_{I}(s)$ instead when microstate probability distribution is the only concerned quantity. Thus
\begin{align}
  H_{I}=\frac{1}{2}\Omega(s)T\sum_{\substack{m=1,\cdots,N\\ \text{HD}(l,m)=1}}e^{i  T\int_0^s[\omega_{ml}-(n_m-n_l)\Delta(s')]\mathrm{d}s'}|\phi_m\rangle\!\langle\phi_l|\,,
\end{align}
where $\omega_{ml}=\omega_0(N_e(m)-N_e(l))$ and HD denotes the Hamming distance with the related details given in the appendix. It is important to note that the couplings acquire oscillatory phases proportional to the number of violated edges that would be averaged out in a stroboscopic sense and consequently lead to strongly suppressed leakage out of the IS subspace.

In this picture, the system's evolution operator over each oscillation period can be represented by an average Hamiltonian expanded via the Magnus series~\cite{Magnus}, with the expansion separating naturally into two parts: terms that preserve the IS subspace and terms responsible for leakage. By bounding the latter using operator norm estimates, we obtain an explicit upper bound on the probability of leaving the IS subspace as given below,
\begin{align}
\label{bound}
    \sqrt{1-P_{\text{IS}}}
    \leq a_1a_3 n^2\tau + (a_1^2a_3+ \frac{\pi}{8} \Omega_{\text{max}}^3T^3)n^3\tau^2 + O(n^4\tau^3)\,,
\end{align}
where $P_{\text{IS}}$ denotes the probability of
staying in the IS subspace, and $a_1=\Omega_{\text{max}}T/2$, \\$a_3=\left({\Omega_{\text{max}}\Delta_{\text{max}}T^2}+{2\pi KT}\right)/{4\pi}$, $\tau=2\pi/\omega_0T$. $\Delta_{\text{max}}$ and $\Omega_{\text{max}}$ are the maximum values of $|\Delta(s)|$ and $ |\Omega(s)|$ respectively. $\Omega(s)$ is Lipschitz continuous and the Lipschitz constant is bounded by the constant $K$. Importantly, the upper bound of Eq.~\eqref{bound} vanishes asymptotically provided the blockade strength $\omega_0$ grows at least as $O(n^{2})$ and the parameters $\Delta_{\text{max}},\Omega_{\text{max}}$, and $T$ are independent of $n$, or equivalently, if $\omega_0$ is fixed, the runtime $T$ scales as $O(n^{2})$ and $\Delta_{\text{max}}$ as well as $\Omega_{\text{max}}$ both scale as $O(n^{-1})$. The details of this proof can be found in the appendix.

The analysis above thus yields a rigorous guarantee: the leakage into non-IS states can be made arbitrarily small with only a quadratic overhead, affirming that the Rydberg-atom-based quantum annealing finds ISs in $O(n^{2})$ time at constant energy cost per atom. Hence, an IS can always be found, although its precise size remains to be determined.

Before formally discussing the quantum speedup, we first numerically compare the performance of the  $O(n^{2})$ quantum algorithm and the $O(n^{2})$ classical greedy algorithm for finding large ISs. Based on the results shown in Fig. \ref{fig:4}, we find that the quantum algorithm often yields larger ISs. 
\begin{figure}[H]
  \centering
  \includegraphics[width=0.45\textwidth]{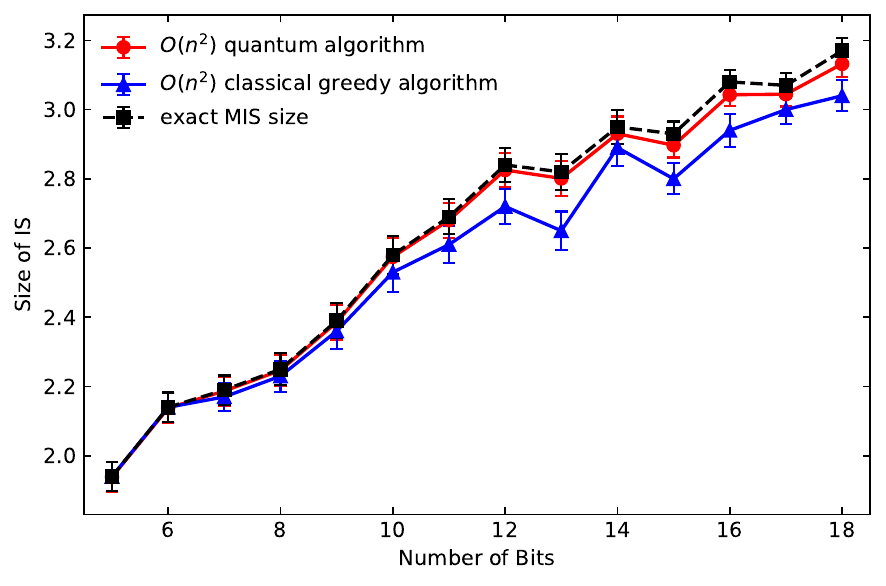}
  \caption{\textbf{$O(n^2)$ Quantum algorithm versus $O(n^2)$ classical algorithm.} The circular data points represent the average size of the IS found by the $O(n^2)$ quantum algorithm. The triangular data points represent the average size of the IS found by the $O(n^2)$ classical greedy algorithm. The square data points represent the exact size of the MIS for the Erd\H{o}s--R\'enyi graph with $p=0.8$ we tested.}
  \label{fig:4}
\end{figure}
For the $O(n^2)$ quantum algorithm, we set $\Omega(s)= \sin(\pi s)$ and $\Delta(s)= \cos(\pi s)$~\cite{wupra,yucpl,nsr} to maximize the sizes of the ISs found. The minimum-degree greedy algorithm is chosen as the classical counterpart for the following reasons: both algorithms share the same $O(n^2)$ complexity; and within this complexity constraint, the minimum-degree greedy algorithm demonstrates outstanding average performance among all classical algorithms~\cite{greedy1,greedy2}.

{\it Quantum speedup.}---
In this section, we provide strict proof for the quadratic quantum speedup of finding IS of size 2. To benchmark the quantum algorithm's performance, we first recall that the best known classical algorithms for finding IS of fixed size $k$. A brute-force search requires $O(n^k k^2)$ time, while more sophisticated methods achieve modest improvements for $k\geq 3$ using fast matrix multiplication~\cite{IS3,IS4567}. However, for the simplest nontrivial case of size-2 IS, the fastest classical runtime remains a $O(n^{2})$ scaling as long as the edge number of the graph is $O(n^{2})$. This makes the problem of finding size-2 IS an ideal case to demonstrate quantum advantage.  
Assuming the quadratic cost in blockade strength is fulfilled so that we can project the Rydberg Hamiltonian onto the IS subspace and obtain an effective Hamiltonian,
\begin{align}
  H_{\text{eff}}=T(\frac{1}{2}\Omega(s) A-\Delta(s) D)\,,
\end{align}
with adjacency matrix $A$ and degree matrix $D$~\cite{nsr}. This Hamiltonian describes 
a quantum walk or diffusion on the  median graph corresponding to the original graph~\cite{nsr,mediangraph}
as shown in Fig. \ref{fig:graph}. Setting $\Delta(s)=0$,~$\Omega(s)=\text{const}$, and starting from the vacuum state (the empty set), the walk rapidly spreads amplitude into the subspace of size-2 ISs. 
\begin{figure}[H]
    \centering
    \begin{tikzpicture}
    \node at (-3,1.5) {(a)};
    \node at (-3,-0.3) {(b)};
    \coordinate (1) at (-2,1.2);
    \coordinate (2) at (0,1.2);
    \coordinate (3) at (2,1.2);
    \coordinate (4) at (-2,0.2);
    \coordinate (5) at (0,0.2);
    \node [draw, fill=black, shape=circle, scale=0.3, label=above:$x_1$] at (1) {0};
    \node [draw, fill=black, shape=circle, scale=0.3, label=above:$x_2$] at (2) {0};
    \node [draw, fill=black, shape=circle, scale=0.3, label=above:$x_3$] at (3) {0};
    \node [draw, fill=black, shape=circle, scale=0.3, label=left:$x_4$] at (4) {0};
    \node [draw, fill=black, shape=circle, scale=0.3, label=right:$x_5$] at (5) {0};
    \draw (1)--(2);
    \draw (2)--(3);
    \draw (1)--(4);
    \draw (4)--(5);
    \draw (4)--(2);
    \draw (5)--(2);

  \coordinate (a) at (0,-0.5);
  \coordinate (b1) at (-3,-2);
  \coordinate (b2) at (-1.5,-2);
  \coordinate (b3) at (0,-2);
  \coordinate (b4) at (1.5,-2);
  \coordinate (b5) at (3,-2);
  \coordinate (c1) at (-2.25,-4);
  \coordinate (c2) at (-0.75,-4);
  \coordinate (c3) at (0.75,-4);
  \coordinate (c4) at (2.25,-4);
  \coordinate (d) at (0,-5.5);
  \draw (a)--(b1);
  \draw (a)--(b2);
  \draw (a)--(b3);
  \draw (a)--(b4);
  \draw (a)--(b5);
  \draw (b1)--(c1);
  \draw (b1)--(c2);
  \draw (b3)--(c1);
  \draw (b3)--(c3);
  \draw (b3)--(c4);
  \draw (b4)--(c3);
  \draw (b5)--(c2);
  \draw (b5)--(c4);
  \draw (d)--(c1);
  \draw (d)--(c2);
  \draw (d)--(c4);
  \node [draw, shape=rectangle,fill=white, scale=1] at (a) {$\{00000\}$};
  \node [draw, shape=rectangle,fill=white, scale=1] at (b1) {$\{10000\}$};
  \node [draw, shape=rectangle,fill=white, scale=1] at (b2) {$\{01000\}$};
  \node [draw, shape=rectangle,fill=white, scale=1] at (b3) {$\{00100\}$};
  \node [draw, shape=rectangle,fill=white, scale=1] at (b4) {$\{00010\}$};
  \node [draw, shape=rectangle,fill=white, scale=1] at (b5) {$\{00001\}$};
  \node [draw, shape=rectangle,fill=white, scale=1] at (c1) {$\{10100\}$};
  \node [draw, shape=rectangle,fill=white, scale=1] at (c2) {$\{10001\}$};
  \node [draw, shape=rectangle,fill=white, scale=1] at (c3) {$\{00110\}$};
  \node [draw, shape=rectangle,fill=white, scale=1] at (c4) {$\{00101\}$};
  \node [draw, shape=rectangle,fill=white, scale=1] at (d) {$\{10101\}$};
    \end{tikzpicture}
    \caption{\textbf{A graph and its corresponding median graph.} (a) An original graph for the independent set problem; (b) the corresponding median graph. Each box (or vertex) in the median graph represents an independent set.}
    \label{fig:graph}
\end{figure}
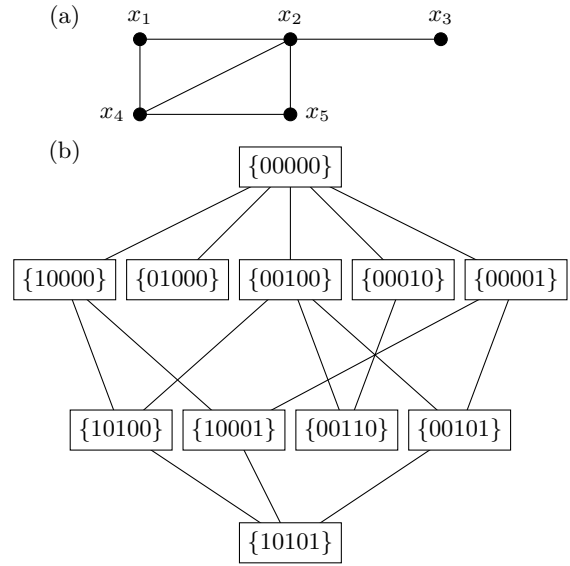

After a very short time $s$, we find
\begin{align}
\label{bound1}
    P_{|\text{IS}|=2}=\frac{1}{8}\Omega^2T^2s^4\left(\frac{n(n-1)}{2}-m\right)^2+O(s^5)\,,
\end{align}
with $n$ number of vertices and $m$ number of edges. For generic graphs which are not too dense, the number of edges in these graphs scales as $\Theta(cn^2),0\leq c<0.5$, so that within an $O(1)$ evolution time the probability of observing a size-2 IS already grows as $\sim n^4 s^4$. The resulting hitting time scales as $s = \kappa n^{-1}$ with constant $\kappa$, well within the available runtime. Taking the limit $n\to \infty$, we obtain the asymptotic expression for the lower bound of the success probability 
\begin{equation}
\label{bound2}
    P_{|\text{IS}|=2}\geq (0.5-c)^2\Omega^2T^2\kappa^4/8\,.
\end{equation}
Fixing the blockade strength to a constant, the above result translates into an overall runtime scaling of $O(n)$, a quadratic improvement over the classical $O(n^2)$ bound. Thus, we provide a rigorous demonstration of quantum speedup for a combinatorial optimization task, and a quantum advantage similar to the well known  Grover-type quadratic improvement.  For extremely dense graphs ($m=n^2/2+o(n^2)$), the quantum speedup becomes slightly slower. Assuming $n(n-1)/2-m= \beta n^{\gamma}+o(n^{\gamma})$ with $0<\gamma<2$ and $\beta>0$, the hitting time of our quantum algorithm becomes $O(n^{2-0.5\gamma})$, which remains faster than the classical $O(n^2)$ runtime.

The above conclusions can be validated on a well-known class of random graphs, the Erd\H{o}s--R\'enyi graph~\cite{erdos}, as shown in Fig. \ref{fig:3}. The simulation results show without doubt that the algorithm's actual performance aligns perfectly with theoretical expectations, i.e., finding the size-2 IS with a probability greater than a known constant within $O(n)$ time.

\begin{figure}[H]
  \centering
  \includegraphics[width=0.45\textwidth]{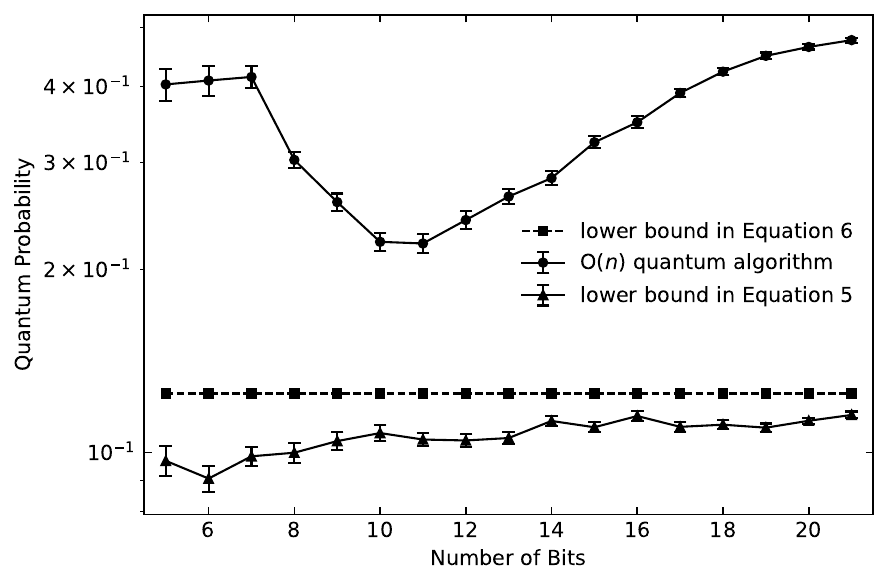}
  \caption{\textbf{The $O(n)$ quantum algorithm's performance for finding size-2 IS.}  The three lines from top to bottom respectively present: the algorithm's actual performance, the asymptotic lower bound of the success rate provided in Eq.~\eqref{bound2}: $(0.5-c)^2\Omega^2T^2\kappa^4/8$, and the theoretical lower bound of the success rate provided in Eq.~\eqref{bound1}: $\Omega^2T^2\kappa^4\left({n(n-1)}/{2}-m\right)^2/8n^4$. We select a class of graphs with appropriate density, namely the Erd\H{o}s--R\'enyi graph with $p=0.5$, to run the quantum algorithm.}
  \label{fig:3}
\end{figure}

It is worth pointing out that the time complexity provided here represents the characteristic time at which the wavefront propagates to the specific sized IS. Longer runtime does not necessarily guarantee a higher algorithm success rate, as subsequent backward scattering will result in coherent superposition of ISs, analogous to the case happened in the Grover algorithm~\cite{Grover}. While some may question the significance of the discovered quantum speedup by arguing that experimentally implementing this Hamiltonian requires $O(n^2)$ time, we believe nevertheless that it is important to distinguish between the time required to prepare the input needed for the algorithm and the runtime of the algorithm itself—the latter determines the algorithm's true time complexity~\cite{AQCRMP}. For classical algorithms, the adjacency matrix of the graph is required as input, whose construction also takes $O(n^2)$ time.

Finally, we would like to readdress the significance of the analytical method we have developed, namely the MEIP. We note that traditional adiabatic conditions~\cite{trivialbound,JRS} are insufficient to capture the behavior we obtain here as energy-gap estimates are either non-rigorous or exponentially pessimistic due to ground-state degeneracy~\cite{Schwinger,Inconsistency,WuComment,Ramsey}. In contrast, MEIP provides a direct and scalable route to bounding leakage and runtime. We expect it will find broad use in certifying quantum advantage especially in regimes where standard gap analysis fails.

{\it Implications for experiment.}---
An important implication of our result is that the quantum annealing protocol is fully deterministic and can be faithfully implemented on Rydberg arrays without relying on oracles. Beyond theoretical significance, our analysis also sheds light on existing experimental simulations. In the landmark 2022 study~\cite{Lukin2022}, a substantial fraction of measured states violated the blockade constraint, failing to form ISs. While such leakage is often attributed solely to insufficient runtime $T$ or blockade strength $\omega_0$, our results given in Eq.~\eqref{bound} reveal that large detuning $\Delta(s)$ also plays a decisive role—even though the detuning term formally commutes with the blockade Hamiltonian. This finding explains naturally why blockade violations persist in experiments despite carefully chosen sweep profiles.  

From a practical perspective, our analysis offers guidance for optimizing experimental parameters: it identifies detuning as a key lever for controlling leakage, complementary to simply extending the runtime. More importantly, our results guarantee theoretically that solving the MIS problem using Rydberg platforms is sufficient to demonstrate a quantum advantage in practically meaningful tasks, and our work demonstrates how to achieve such quantum advantage for this problem. Current experimental limitations mainly stem from finite coherence times and the restricted graph topologies of reconfigurable  two-dimensional atom arrays. Recent experiments have made significant progress for the former~\cite{6100}. The latter can be partially overcome by graph-embedding techniques~\cite{liujinguounitdisk} or by introducing ancillary qubits~\cite{rydbergwire}. These and other related connections highlight how our rigorous complexity results can directly influence the design and interpretation of near-term quantum annealing experiments.

More broadly, our work introduces MEIP as a new framework for certifying the complexity of quantum adiabatic algorithms. By bounding leakage directly in the interaction picture, MEIP bypasses the limitations of conventional gap-based analysis and enables a rigorous proof that Rydberg-based quantum annealers can find independent sets in 
$O(n^2)$ time, and even $O(n)$for size-2 ISs. This establishes a strict quadratic quantum speedup over the best-known classical bound. The mapping of constrained dynamics to a quantum walk on median graphs further reveals how graph-theoretic structure controls algorithmic performance, linking complexity theory, quantum dynamics, and experimental feasibility in a unified perspective. These results establish that solving MIS on programmable Rydberg hardware is already sufficient to demonstrate meaningful quantum advantage, while positioning MEIP as a broadly applicable analytical tool for the next generation of provably efficient quantum algorithms.

{\it Acknowledgments.}---X.Z. would like to thank Hongye Yu and Zhigang Hu for helpful discussions. XJZ and BW are supported by National Natural Science Foundation of China (NSFC) (Grants No.  92365202, No. 12475011 and No.  11921005), the National Key R\&D Program of China (2024YFA1409002), the Shanghai Municipal Science and Technology Project (25LZ2601100) and by Shanghai Municipal Science and Technology Major Project (Grant No.2019SHZDZX01). PYG and LY are supported by NSFC (Grants No. 12361131576 and No. 92265205), and by the Innovation Program for Quantum Science and Technology (2021ZD0302100).

\bibliography{ref}
\bibliographystyle{apsrev4-2}

\newpage

\appendix

\section{The details of MEIP}
Let's first recall the Rydberg atom Hamiltonian, 
\begin{align}
  H_0(s) &= \tfrac{1}{2}\Omega(s)T\sum_j \sigma_j^x
  +\left[-\Delta(s)T\sum_j {n}_j 
  + \omega_0 T \sum_{\langle i,j\rangle}{n}_i{n}_j\right] \nonumber \\
  &\equiv H_{1}+H_{2}\,,
\end{align}
where $s\equiv t/T\in[0,1]$ is the normalized time, $T$ the total runtime, $\Omega(s)$ the Rabi frequency, and $\Delta(s)$ the detuning. The operator ${n}_j = (\sigma_j^z+1)/2$ projects onto the Rydberg state of site $j$, while the interaction $\omega_0$ encodes the blockade between connected vertices. Throughout we take $\hbar=1$. The computational basis $\{\ket{\phi_1},\ldots,\ket{\phi_N}\}$ with $N=2^n$ naturally corresponds to classical vertex configurations and $N_e(m)$ denotes the number of edges in $\ket{\phi_m}$. Here we directly apply MEIP to the Rydberg atom Hamiltonian. To transform into the interaction picture, we set 
\begin{equation}
    U_{I}(s) = \mathcal{T}e^{-i\int_0^s{{H}}_{2}(s')\mathrm{d}s'}\,,
\end{equation}
with $\ket{\Phi_{I}(s)}=U_{I}^\dagger(s)\ket{\Phi_S(s)}$, which bring into the interaction picture: $H_{I}=U_{I}^\dagger H_{1} U_{I}$. With $n_m=\braket{\phi_m|\sum_i n_i|\phi_m}$ denoting the number of excited vertices, we can calculate 
\begin{equation}
    U_{I}=\sum_{m=1}^{N}e^{-i(E_m s - n_m T\int_0^s\Delta(s)\mathrm{d}s')}\ket{\phi_m}\bra{\phi_m}\,,
\end{equation}
where $E_m=N_e(m)\omega_0T$. We note that $U_{I}(s)$ is a diagonal matrix whose elements are complex numbers with modulus 1. Thus, assuming 
\begin{equation}
    \Phi_{I}(s)=\sum_{i}c_i'(s)\ket{\phi_i}\,,
\end{equation}
and
\begin{equation}
    \Phi_{S}(s)=\sum_{i}c_i(s)\ket{\phi_i}\,,
\end{equation}
it is clear that $|c_i'(s)|^2=|c_i(s)|^2$, implicating that we can simply study $H_{I}$ and $\Phi_{I}(s)$ when probability distribution is the only concerned quantity. Note that
\begin{equation}
    H_{1}=\frac{\Omega(s)T}{2}\sum_j \sigma_j^x\,,
\end{equation}
and thus
\begin{align}
  H_{I}=\frac{1}{2}\Omega(s)T\sum_{\substack{m=1,\cdots,N\\ \text{HD}(l,m)=1}}e^{i  T\int_0^s[\omega_{ml}-(n_m-n_l)\Delta(s')]\mathrm{d}s'}|\phi_m\rangle\!\langle\phi_l|\,,
\end{align}
can be calculated directly, where 
\begin{equation}
    \omega_{ml}=\omega_0(N_e(m)-N_e(l))\,,
\end{equation}
and HD denotes the Hamming distance.

To find the leakage out of the IS subspace, we consider the expansion coefficients $c_{ml}$ with $\ket{\phi_l}$ being an IS and $\ket{\phi_m}$ being a non-IS. In this case, $\omega_{ml}\geq \omega_0$ because there is no edge in the IS and at least one edge in the non-IS. We also have $n_m-n_l=1$ because $\text{HD}(l,m)=1$. Thus the coefficients between the IS and 
the non-IS become
\begin{equation}
  \label{hopping}
  c_{ml}=\frac{1}{2}\Omega(s)T\exp\left({i  T\int_0^s[N_e(m)\omega_0-\Delta(s')]\mathrm{d}s'}\right)\,,
\end{equation}
where $N_e(m)\geq1$ is the number of edges in the non-IS $\ket{\phi_m}$. According to the secular approximation, $c_{ml}$ will tend to 0 when $|\Delta|<\omega_0$ , $T(N_e(m)\omega_0-\Delta)\gg 1$, and $\Omega(s)$ changes slowly. Consequently, the IS subspace is decoupled from the rest part of the Hilbert space and the evolution is restricted in the IS subspace. In the following we consider the worst case scenario of $N_e(m)=1$.

To perform the Magnus expansion within each oscillation interval, we assume $T$ is constant and $\omega_0 T/2\pi = L\in \mathbb{N} $, the time domain $[0,1]$ can be divided equally into $L$ intervals by $s_0<s_1<\cdots<s_L$, $s_j = j/L$. The length of each interval $\tau=1/L=2\pi/\omega_0 T$ is a dimensionless small quantity. Assuming $\Delta(s)$ and $\Omega(s)$ are bounded by constant $\Delta_{\text{max}}$ and $\Omega_{\text{max}}$ respectively, and both of which are independent of the graph size $n$; $\Omega(s)$ is Lipschitz continuous and the Lipschitz constant is bounded by a constant $K$ independent of $n$; $\Delta(s)$ is constant in each interval, namely $\Delta(s)=\Delta_j$ for $s\in(s_{j-1},s_j)$.  

During each interval we can define a time-independent average Hamiltonian $\bar{H}_{I}$ through 
\begin{equation}
    U(s_{j-1},s_j)=e^{-i\bar{H}_{I}\tau}\,,
\end{equation}
where $U$ is the propagator of $H_{I}$ and we have omitted the subscript $j$ in $\bar{H}_{I}$ for brevity. This average Hamiltonian can be expanded by the Magnus expansion~\cite{Magnus}:
\begin{equation}
    \bar{H}_{I}=\bar{H}_{I}^{(1)}+\bar{H}_{I}^{(2)}+\bar{H}_{I}^{(3)}+\cdots\,.
\end{equation}
A very sufficient condition for the Magnus expansion to converge~\cite{MagnusConverge} is for any $s\in(s_{j-1},s_{j})$,
\begin{equation}
    \lVert H_{I}(s) \rVert \tau < 1\,,
\end{equation}
where the matrix norm takes the operator norm, namely the spectral radius of $H_{I}(s)$. It is shown by the subsequent lemma that 
\begin{equation}
    \lVert H_{I}(s) \rVert<|\Omega(s)|Tn/2\,.
\end{equation}
Thus, a sufficient convergence condition is 
\begin{equation}
   \pi \Omega_{\text{max}}n/\omega_0<1\,, 
\end{equation}
one simple option is allowing $\omega_0$ to vary with $n$, e.g., $\omega_0=O(n^{\alpha}),\alpha \geq 1$, or equivalently $ \tau=O(n^{-\alpha}),\alpha \geq 1$. Thus, we can treat $\tau$ as a small dimensionless quantity and expand the subsequent results according to its order. 

We divide the whole Hilbert space into subspaces of IS and non-IS. For each term in the Magnus expansion, we can always uniquely divide it into block matrix, $\bar{H}_{I}^{(i)}=\bar{H}^{(i)}_{\text{noleak}}+\bar{H}^{(i)}_{\text{leak}}$, which means if $\ket{\psi}\in$ IS, then $\bar{H}^{(i)}_{\text{noleak}}\ket{\psi}\in$ IS, $\bar{H}^{(i)}_{\text{leak}}\ket{\psi}\in$ non-IS or 0, and analogously if $\ket{\psi}\in$ non-IS, then
$\bar{H}^{(i)}_{\text{noleak}}\ket{\psi}\in$ non-IS or 0, $\bar{H}^{(i)}_{\text{leak}}\ket{\psi}\in$ IS or 0. For each interval, we can estimate the propagator by
  \begin{align}
    U(s_{j-1},s_j)\ket{\psi}&=e^{-i\bar{H}_{I}\tau}\ket{\psi}\nonumber \\
    &=e^{-i(\bar{H}_{I}^{(1)}+\bar{H}_{I}^{(2)})\tau}\ket{\psi}+\ket{\delta\psi_1}\nonumber \\
    &=e^{-i(\bar{H}_{\text{noleak}}^{(1)}+\bar{H}_{\text{noleak}}^{(2)})\tau}\ket{\psi}+\ket{\delta\psi_1}+\ket{\delta\psi_2}\,.
  \end{align}
where $\ket{\delta\psi_1}$ is due to the truncation of the Magnus expansion, $\ket{\delta\psi_2}$ is due to ignoring the $\bar{H}_{\text{leak}}^{(1)}+\bar{H}_{\text{leak}}^{(2)}$. After we done the estimation of $\ket{\delta\psi_1}$ and $\ket{\delta\psi_2}$, the total leakage is bounded by 
\begin{equation}
  \lVert\ket{\psi_{\text{leak}}}\rVert \leq \tau^{-1}(\lVert\ket{\delta\psi_1}\rVert+\lVert\ket{\delta\psi_2}\rVert)\,.
\end{equation}

Let us introduce a linear algebra lemma and some results about the Magnus expansion before the estimation of $\lVert\ket{\delta\psi_1}\rVert$ and $\lVert\ket{\delta\psi_2}\rVert$.

\begin{lemma}\label{lem1}
  If matrix $H$ is a $N\times N$ Hermitian matrix and it has at most $m$ non-zero elements in each row and $|H_{ij}|\leq w, i,j=1,2,\cdots,N$. Then the spectral radius $\rho$ of $H$ satisfies $\rho \leq mw$.
\end{lemma}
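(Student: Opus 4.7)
The plan is to bound the spectral radius by a more tractable matrix norm. Since $H$ is Hermitian, its spectral radius $\rho$ coincides with the operator $2$-norm $\lVert H\rVert_2$, so it suffices to exhibit any convenient matrix norm that dominates $\rho$ and is manifestly at most $mw$. The natural candidate is the induced infinity norm (maximum absolute row sum), which is easy to control from the sparsity and entrywise hypotheses.

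The cleanest route I would take is Gershgorin's circle theorem: every eigenvalue $\lambda$ of $H$ lies in at least one of the discs $\{z\in\mathbb{C}: |z-H_{ii}|\leq \sum_{j\neq i}|H_{ij}|\}$, so for some index $i$ one has $|\lambda|\leq |H_{ii}|+\sum_{j\neq i}|H_{ij}|=\sum_j |H_{ij}|$. By assumption row $i$ contains at most $m$ nonzero entries, each bounded in modulus by $w$, hence $\sum_j |H_{ij}|\leq mw$. Taking the maximum over $\lambda$ gives $\rho\leq mw$. As an independent sanity check one may instead invoke the standard inequality $\lVert H\rVert_2\leq\sqrt{\lVert H\rVert_1\lVert H\rVert_\infty}$ together with the Hermiticity identity $\lVert H\rVert_1=\lVert H\rVert_\infty$, which reduces the problem again to bounding the maximum absolute row sum by $mw$.

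There is essentially no real obstacle here: the statement is a textbook consequence of Gershgorin's theorem, and the only point worth emphasizing is that the sparsity count $m$ and the entrywise bound $w$ combine multiplicatively as soon as the row-sum is written out. In the application of this lemma to $H_I$ in the main text, $m$ will be the coordination number $n$ of the Hamming-distance-one graph on $\{0,1\}^n$ and $w=\Omega(s)T/2$, yielding $\lVert H_I(s)\rVert\leq |\Omega(s)|Tn/2$, exactly the bound quoted immediately after the lemma statement.
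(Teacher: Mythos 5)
Your proposal is correct and is essentially the paper's own argument: the paper proves the lemma directly by taking an eigenvector, singling out the component of maximal modulus, and bounding the corresponding row sum by $mw$ --- which is precisely the standard proof of the Gershgorin/row-sum estimate you invoke as a black box. The only difference is that you cite the theorem while the paper reproves it inline; the mathematics is identical.
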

  
\begin{proof}
Assuming $|\lambda|=\rho$ and $H v=\lambda v$,$v=(v_1,v_2,\cdots,v_N)$. Assuming $\max(|v_i|)=|v_k|$, then we have:
\begin{align}
|\lambda|&=\frac{\left|\sum_{j=1}^{N}H_{kj}v_j\right|}{|v_k|}\nonumber\\
&\leq\frac{mw|v_k|}{|v_k|}\nonumber\\
&=mw\,.
\end{align}
\end{proof}
  The first three orders of the Magnus expansion are given by
  \begin{equation}
    \bar{H}_{I}^{(1)}=\frac{1}{\tau}\int_{s_{j-1}}^{s_{j}}\md s ~ H_{I}(s)\,,
  \end{equation}
  \begin{equation}
    \bar{H}_{I}^{(2)}=\frac{1}{2i\tau}\int_{s_{j-1}}^{s_{j}} 
    \md s \int_{s_{j-1}}^{s}\md s' ~ [H_{I}(s),H_{I}(s')]\,,
  \end{equation}
  \begin{align}
    \bar{H}_{I}^{(3)}&=-\frac{1}{6\tau}\int_{s_{j-1}}^{s_{j}} 
    \md s \int_{s_{j-1}}^{s}\md s'\int_{s_{j-1}}^{s'}\md s'' \nonumber\\
    &\{\left[H_{I}(s),[H_{I}(s'),H_{I}(s'')]\right] \nonumber \\
    &+\left[[H_{I}(s),H_{I}(s')],H_{I}(s'')\right] \}\,.
  \end{align}

  According to Lemma 1, it is clear that 
  \begin{equation}
    \lVert H^{(1)}_{I}\tau\rVert\leq \frac{1}{2}\Omega_{\text{max}}Tn\tau, \lVert \bar{H}^{(1)}_{\text{noleak}}\tau\rVert\leq \frac{1}{2}\Omega_{\text{max}}Tn\tau\,,
  \end{equation}

  \begin{equation}
    \lVert H^{(2)}_{I}\tau\rVert\leq \frac{1}{8}\Omega_{\text{max}}^2T^2n^2\tau^2, \lVert \bar{H}^{(2)}_{\text{noleak}}\tau\rVert\leq \frac{1}{8}\Omega_{\text{max}}^2T^2n^2\tau^2\,.
  \end{equation}

  There is also a useful bound~\cite{MagnusNorm} for $\lVert \bar{H}_{I}^{(k)} \tau\rVert$:
  \begin{equation}
    \lVert \bar{H}_{I}^{(k)} \tau \rVert \leq \pi\left(\frac{1}{\xi}\int_{s_{j-1}}^{s_j}\lVert H_{I}(s)\rVert\md s \right)^k \,,
  \end{equation}
  where $\xi = \int_{0}^{2\pi}\md x/(4+x(1-\cot(x/2)))\approx 1.0868687$. Then we obtain
  \begin{align}
    \lVert \bar{H}_{I}^{(k)} \tau \rVert  \leq \pi \tau^k \left(\frac{\Omega_{\text{max}}Tn}{2}\right)^k =\pi \left(\frac{\Omega_{\text{max}}T}{2}\right)^k(n\tau )^k \,,
  \end{align}
  \begin{equation}
    \sum_{k=3}^{\infty}\lVert \bar{H}_{I}^{(k)} \tau \rVert<\frac{\pi}{8}(\Omega_{\text{max}}Tn\tau)^3+\frac{\pi}{16}(\Omega_{\text{max}}Tn\tau)^4+O(n^5\tau^5)\,.
  \end{equation}

To estimate $\lVert\ket{\delta\psi_1}\rVert$, we apply the error bound of Suzuki-Trotter decomposition. Define 
\begin{equation}
  \tilde{U}(s_{j-1},s_j)=e^{-i(\bar{H}_{I}^{(1)}+\bar{H}_{I}^{(2)} )\tau}e^{-i\sum_{k=3}^{\infty}\bar{H}_{I}^{(k)} \tau}\,.
\end{equation}
It can be shown that~\cite{trotter}
\begin{align}
  &\lVert {U}(s_{j-1},s_j)-\tilde{U}(s_{j-1},s_j) \rVert\leq\frac{\tau^2}{2}\lVert[\bar{H}_{I}^{(1)}+\bar{H}_{I}^{(2)},\sum_{k=3}^{\infty}\bar{H}_{I}^{(k)}]\rVert \nonumber \\
&\leq\tau^2 (\lVert\bar{H}_{I}^{(1)}\rVert + \lVert\bar{H}_{I}^{(2)}\rVert)\cdot \sum_{k=3}^{\infty}\lVert\bar{H}_{I}^{(k)}\rVert \nonumber \\
&\leq \left(\frac{\Omega_{\text{max}}Tn\tau}{2} + \frac{\Omega_{\text{max}}^2T^2n^2\tau^2}{8}\right)  \left(\frac{\pi\Omega_{\text{max}}^3T^3}{8}(n\tau)^3+O(n^4\tau^4)\right)\nonumber \\
&=\frac{\pi}{16}(\Omega_{\text{max}}Tn\tau)^4+O(n^5\tau^5)\,.
\end{align}
And we can estimate $\lVert e^{-i\sum_{k=3}^{\infty}\bar{H}_{I}^{(k)} \tau} -I \rVert$ by
\begin{align}
  &\lVert e^{-i\sum_{k=3}^{\infty}\bar{H}_{I}^{(k)} \tau} -I \rVert \leq\sum_{k=3}^{\infty}\lVert \bar{H}_{I}^{(k)} \tau \rVert + O\left(\left(\sum_{k=3}^{\infty}\lVert \bar{H}_{I}^{(k)} \tau \rVert\right)^2\right)\nonumber \\
  & \leq \frac{\pi}{8}(\Omega_{\text{max}}Tn\tau)^3+\frac{\pi}{16}(\Omega_{\text{max}}Tn\tau)^4+O(n^5\tau^5)\,.
\end{align}
Thus $\lVert\ket{\delta\psi_1}\rVert$ could be bounded by
\begin{align}
  \lVert\ket{\delta\psi_1}\rVert &\leq \lVert {U}-\tilde{U} \rVert + \lVert e^{-i\sum_{k=3}^{\infty}\bar{H}_{I}^{(k)} \tau} -I \rVert \nonumber \\
  &\leq \frac{\pi}{8}(\Omega_{\text{max}}T n\tau)^3+\frac{\pi}{8}(\Omega_{\text{max}}T n\tau)^4+O(n^5\tau^5)\,.
\end{align}

\newpage

To estimate $\lVert\ket{\delta\psi_2}\rVert$, we should first carefully estimate $\lVert \bar{H}^{(1)}_{\text{leak}}\rVert$ and $\lVert \bar{H}^{(2)}_{\text{leak}}\rVert$. Recall that $\bar{H}_{I}^{(i)}=\bar{H}^{(i)}_{\text{noleak}}+\bar{H}^{(i)}_{\text{leak}}$. Let's first estimate $\lVert \bar{H}^{(1)}_{\text{leak}}\rVert$,  which could be done by examining $H^{(1)}_{ml}$, the coefficient of $\ket{\phi_m}\bra{\phi_l}$ in $\bar{H}_{I}^{(1)}$ with $\ket{\phi_l}$ being an IS and $\ket{\phi_m}$ being a non-IS:
\begin{align}
  &|H^{(1)}_{ml}|=  \frac{1}{\tau}\left|\int_{s_{j-1}}^{s_{j}}\md s ~ c_{ml}(s)\right| \nonumber \\
  &=\frac{T}{2\tau}\left|\int_{s_{j-1}}^{s_{j}}\md s\Omega(s)e^{i T(\omega_0-\Delta_j)(s-s_{j-1})}\right| \nonumber \\
  &=\frac{T}{2\tau}\left|\int_{s_{j-1}}^{s_{j}}\md s[\Omega(s_{j-1})+\Omega(s)-\Omega(s_{j-1})]e^{i T(\omega_0-\Delta_j)(s-s_{j-1})} \right|\nonumber \\
  &\leq \frac{T}{2\tau}\cdot\left( \frac{\Omega(s_{j-1})|e^{i T(\omega_0-\Delta_j)\tau}-1|}{T(\omega_0-\Delta_j)} + K \tau^2 \right)\nonumber \\
  &=\frac{T}{2\tau}\cdot\left( \frac{\Omega(s_{j-1})|e^{-i T\Delta_j\tau}-1|}{T(\omega_0-\Delta_j)} + K \tau^2 \right) \nonumber \\
  &=\frac{T}{2\tau}\cdot\left( \frac{\Omega(s_{j-1})(\Delta_jT\tau+\Delta_j^2T^2\tau^2/2+O(\tau^3))}{T(\omega_0-\Delta_j)} + K \tau^2 \right)\,,
\end{align}
where we already used the Lipschitz continuity of $\Omega(s)$. Thus we obtain
\begin{align}
  &\lVert \bar{H}^{(1)}_{\text{leak}}\tau\rVert \leq n\tau |H^{(1)}_{ml}| \nonumber \\
  &\leq \frac{n}{2}\cdot\left( \frac{\Omega(s_{j-1})(\Delta_jT\tau+\Delta_j^2T^2\tau^2/2+O(\tau^3))}{\omega_0-\Delta_j} + KT \tau^2\right)\nonumber \\
  &\leq \frac{1}{4\pi}\left({\Omega_{\text{max}}\Delta_{\text{max}}T^2}+{2\pi KT}\right)n\tau^2 + \frac{\pi+1}{8\pi^2}\Omega_{\text{max}}\Delta_{\text{max}}^2T^3n\tau^3+O(n\tau^4)\,.
\end{align}
Then we estimate $\lVert \bar{H}^{(2)}_{\text{leak}}\rVert$, which also could be done by examining $H^{(2)}_{ml}$, the coefficient of $\ket{\phi_m}\bra{\phi_l}$ in $\bar{H}_{I}^{(2)}$ with $\ket{\phi_l}$ being an IS and $\ket{\phi_m}$ being a non-IS. Recall the definition of $\bar{H}_{I}^{(2)}$
\begin{equation}
    \bar{H}_{I}^{(2)}=\frac{1}{2i\tau}\int_{s_{j-1}}^{s_{j}} 
    \md s \int_{s_{j-1}}^{s}\md s' ~ [H_{I}(s),H_{I}(s')]\,,
\end{equation}
\begin{align}
  |H^{(2)}_{ml}| &\leq \frac{1}{2\tau} \sum_k \left|\int_{s_{j-1}}^{s_{j}}\md s \int_{s_{j-1}}^{s}\md s' ~ c_{mk}(s)c_{kl}(s')\right| \nonumber \\
  &+ \left|\int_{s_{j-1}}^{s_{j}}\md s \int_{s_{j-1}}^{s}\md s' ~ c_{mk}(s')c_{kl}(s)\right| \nonumber \\
  & \leq \frac{1}{\tau} \sum_k \left|\int_{s_{j-1}}^{s_{j}}\md s  ~ c_{mk}(s)\right|\left| \int_{s_{j-1}}^{s_j}\md s'~c_{kl}(s')\right|\,,
\end{align}

Note that the Hamming distance between $\ket{\phi_l}$ and $\ket{\phi_m}$ is 2. The summation only run through 2 possible configurations and we have 
\begin{align}
  |H^{(2)}_{ml}| &\leq \frac{2}{\tau}\left( \frac{1}{2}\Omega_{\text{max}} T \tau  \cdot |H^{(1)}_{ml}|\right) \nonumber \\
  &= \Omega_{\text{max}} T |H^{(1)}_{ml}|\,,
\end{align}

\begin{align}
  \lVert \bar{H}^{(2)}_{\text{leak}}\tau\rVert &\leq \frac{n(n-1)}{2}\tau |H^{(2)}_{ml}| \nonumber \\
  &\leq \frac{n^2\tau}{2} \Omega_{\text{max}} T |H^{(1)}_{ml}|\,.
\end{align}

\newpage
\begin{widetext}
Again, with the Suzuki-Trotter decomposition, we have
\begin{align}
  &\lVert e^{-i(\bar{H}_{I}^{(1)}+\bar{H}_{I}^{(2)}) \tau}-e^{-i(\bar{H}_{\text{noleak}}^{(1)}+\bar{H}_{\text{noleak}}^{(2)})\tau} e^{-i(\bar{H}_{\text{leak}}^{(1)}+\bar{H}_{\text{leak}}^{(2)})\tau} \rVert \nonumber \\
  &\leq \frac{\tau^2}{2}\lVert[\bar{H}_{\text{noleak}}^{(1)}+\bar{H}_{\text{noleak}}^{(2)},\bar{H}_{\text{leak}}^{(1)}+\bar{H}_{\text{leak}}^{(2)}]\rVert\nonumber \\
  & \leq \lVert\bar{H}_{\text{noleak}}^{(1)}\tau+\bar{H}_{\text{noleak}}^{(2)}\tau\rVert \cdot \lVert \bar{H}_{\text{leak}}^{(1)}\tau+\bar{H}_{\text{leak}}^{(2)}\tau\rVert \nonumber \\
  & \leq \left( \frac{1}{2}\Omega_{\text{max}}Tn\tau+\frac{1}{8}\Omega_{\text{max}}^2T^2n^2\tau^2 \right) \nonumber \\
  & \cdot (1+\frac{n\Omega_{\text{max}}T}{2})\left( \frac{1}{4\pi}\left({\Omega_{\text{max}}\Delta_{\text{max}}T^2}+{2\pi KT}\right)n\tau^2 + \frac{\pi+1}{8\pi^2}\Omega_{\text{max}}\Delta_{\text{max}}^2T^3n\tau^3 +  O(n\tau^4)\right)  \nonumber\\
  &=a_1^2a_3n^3\tau^3 +a_1a_2a_3n^4\tau^4 + a_1a_2a_4n^4\tau^5+(a_2a_3+a_1^2a_4)n^3\tau^4+a_1a_3n^2\tau^3+a_1a_4n^2\tau^4+O(n^3\tau^5)\,.
\end{align}

with $a_1=\frac{1}{2}\Omega_{\text{max}}T$, $a_2=\frac{1}{8}\Omega_{\text{max}}^2T^2$, $a_3=\frac{1}{4\pi}\left({\Omega_{\text{max}}\Delta_{\text{max}}T^2}+{2\pi KT}\right)$, $a_4=\frac{\pi+1}{8\pi^2}\Omega_{\text{max}}\Delta_{\text{max}}^2T^3$.

We can estimate $\lVert e^{-i(\bar{H}_{\text{leak}}^{(1)}+\bar{H}_{\text{leak}}^{(2)})\tau} -I \rVert$ by
\begin{align}
  \lVert e^{-i(\bar{H}_{\text{leak}}^{(1)}+\bar{H}_{\text{leak}}^{(2)})\tau} -I \rVert &\leq \lVert (\bar{H}_{\text{leak}}^{(1)}+\bar{H}_{\text{leak}}^{(2)})\tau\rVert + O\left(\lVert (\bar{H}_{\text{leak}}^{(1)}+\bar{H}_{\text{leak}}^{(2)})\tau \rVert^2\right)\nonumber \\
  &\leq a_1a_3 n^2\tau^2+a_3n\tau^2 +O(n^4\tau^4)\,.
\end{align}

Finally we can bounded $\lVert\ket{\delta\psi_2}\rVert$ by
   \begin{align}
    \lVert\ket{\delta\psi_2}\rVert &\leq \lVert e^{-i(\bar{H}_{I}^{(1)}+\bar{H}_{I}^{(2)}) \tau}-e^{-i(\bar{H}_{\text{noleak}}^{(1)}+\bar{H}_{\text{noleak}}^{(2)})\tau} e^{-i(\bar{H}_{\text{leak}}^{(1)}+\bar{H}_{\text{leak}}^{(2)})\tau} \rVert + \lVert e^{-i(\bar{H}_{\text{leak}}^{(1)}+\bar{H}_{\text{leak}}^{(2)})\tau} -I \rVert \nonumber \\
    &\leq  a_1a_3 n^2\tau^2 + a_1^2a_3n^3\tau^3 + a_3n\tau^2 + O(n^4\tau^4)\,.
  \end{align}

Now we can estimate the total leakage $\lVert\ket{\psi_{\text{leak}}}\rVert$. The leakage during one interval is bounded by $\lVert\ket{\delta\psi_1}\rVert + \lVert\ket{\delta\psi_2}\rVert$. Multiplying the propagators in turn, we obtain the bound for the total leakage:
  \begin{align}
    \lVert\ket{\psi_{\text{leak}}}\rVert &\leq \frac{1}{\tau}\left(\lVert\ket{\delta\psi_1}\rVert + \lVert\ket{\delta\psi_2}\rVert\right) \nonumber \\
    &\leq a_1a_3 n^2\tau + (a_1^2a_3+ \frac{\pi}{8} \Omega_{\text{max}}^3T^3)n^3\tau^2 + O(n^4\tau^3)\,.
  \end{align}
\end{widetext}
If the term $n^2\tau$ is suppressed then every term is suppressed, so a simple choice is to set $\tau=O(n^{-2})$, namely a $O(n^{2})$ cost in $\omega_0$. Then we obtain
\begin{align}
  \sqrt{1-P_{\text{IS}}}&=\lVert\ket{\psi_{\text{leak}}}\rVert \nonumber\\
  &\leq  \frac{\Omega_{\text{max}}T}{8\pi}\left({\Omega_{\text{max}}\Delta_{\text{max}}T^2}+{2\pi KT}\right) \tau_0+O(n^{-1})\,.
\end{align}
which can be arbitrarily small as $n$ grows. The total loss is simply a $O(1)$ cost in $T,\Delta(s)$, and $\Omega(s)$, and a $O(n^{2})$ cost in the blockade strength $\omega_0$. Now if we set $T=T\cdot n^{2}$, $\omega_0=\omega_0 \cdot n^{-2}$, $\Delta(s)=\Delta(s)\cdot n^{-2}$ and $\Omega(s)=\Omega(s)\cdot n^{-2}$, the Schr\"odinger equation Eq.~\eqref{Hamiltonian} remain unchanged, claiming that a $O(n^{2})$ time cost at constant energy cost per atom is sufficient to suppress the leakage. In other words, the time complexity for quantum annealer finding IS is O($n^{2}$). However, please note that the size of the IS in the results is not guaranteed at this point.

\section{Quantum speedup for finding size-2 IS}
If we set $\Delta(s)=0$ and $\Omega(s)=\text{const}$, $H_{\text{eff}}$ is essentially the Hamiltonian of a quantum walk on the median graph. Assuming the graph for the independent set problem has $n$ vertices and $m$ edges. Then it has $(n(n-1)/{2}-m)$ independent sets of size 2. For the quantum walk on the median graph, let $\ket{\psi_0}$ denotes the lattice corresponding to the empty set and $\ket{\psi_{2i}}$ denote the lattices corresponding to  the independent sets of size 2, $i=1,2,\cdots, (n(n-1)/{2}-m)$. After a very short time $s$, the probability of measuring an IS of size 2 is
\begin{align*}
    P_{|\text{IS}|=2}&=\sum_i|\braket{\psi_{2i}|e^{-iH_{\text{eff}}s}|\psi_{0}}|^2 \\
    &=\sum_i\frac{1}{4}s^4|\braket{\psi_{2i}|H_{\text{eff}}^2|\psi_{0}}|^2+O(s^5)\\
    &=\frac{1}{8}\Omega^2T^2s^4\left(\frac{n(n-1)}{2}-m\right)^2+O(s^5)\,.
\end{align*}
For the vast majority of graphs, the number of edges $m$ satisfies $m=\Theta(cn^2)$ with $0\leq c<0.5$(unless the graph is a particularly dense graph). For these graphs and short time evolution, we have $P_{|\text{IS}|=2}\sim s^4n^4$. Thus, the hitting time for IS of size 2 is simply $s\sim n^{-1}$ and it is within the $O(1)$ algorithm runtime. Using the same trick, if we fix the blockade strength to be constant, the hitting time for finding IS of size 2 becomes $n^{-1}\times O(n^{2})=O(n)$, which is faster than the classical $O(n^2)$ algorithm and reproduces the Grover type quantum speedup.

For extremely dense graphs ($m=n^2/2+o(n^2)$), let us assume $n(n-1)/2-m= \beta n^{\gamma}+o(n^{\gamma})$ with $0<\gamma<2$ and $\beta>0$, the hitting time of the quantum algorithm becomes $O(n^{2-\gamma/2})$, still faster than the classical $O(n^2)$ runtime.

\section{The result obtained by two well-known adiabatic conditions}
Let's first introduce two commonly used adiabatic conditions. One well known adiabatic condition~\cite{trivialbound} is $t_f\gg \max_{s\in [0,1]}|\braket{\varepsilon_0(s)|\partial_sH(s)|\varepsilon_1(s)}|/{|\varepsilon_1(s)-\varepsilon_0(s)|^2}$,where the dynamics is driven by $i\md \ket{\psi(s)}/\md s=t_f H(s)\ket{\psi(s)}$, $s\in[0,1] $ and $\ket{\varepsilon_i(s)}, j\in\{0,1,2\cdots\}$ denote the instantaneous eigenstates of $H(s)$, and the initial state is prepared in $\ket{\psi(0)}=\ket{\varepsilon_0(0)}$. This condition is regarded as the approximate version of adiabatic theorem, because it cannot provide either strict inequalities or bounds on the closeness between the actual time-evolved state and the desired eigenstate.

The second condition, provided by Jansen, Ruskai and Seiler (JRS)~\cite{JRS}, is regarded as the rigorous version of the adiabatic theorem  which has been used to prove the speedup of many quantum adiabatic algorithms. Suppose that the spectrum of $H(s)$ restricted to $P(s)$ consists of $m(s)$ eigenvalues (each possibly degenerate with crossing permitted) separated by a gap of $\Delta_E(s)$ from the rest of the spectrum of $H(s)$, a sufficient adiabatic condition is $t_f\gg\max\{{m(s)\lVert H^{[2]}(s)\rVert}/{\Delta_E^2(s)},\\{m^{1.5}(s)\lVert H^{[1]}(s)\rVert^2}/{\Delta_E^3(s)},{m(s)\lVert H^{[1]}(s)\rVert}/{\Delta_E^2(s)}\}$, where $H^{[1]}$ and $H^{[2]}$ denote the first or second time derivatives of $H(s)$.

We will carry out an analogous interaction transformation so that we can apply these two adiabatic conditions properly. If we set $U_I(s) = \mathcal{T}e^{-iT\int_0^s(\frac{\Omega(s)}{2}\sum_j \sigma_j^x
-\Delta(s)\sum_j {n}_j)\mathrm{d}s'}$ and $\ket{\Phi_I(s)}=U_I^\dagger(s)\ket{\Phi_S(s)}$, again we obtain an interaction picture transformation with ${H}_{I}(s)=U_I^\dagger(s)(\omega_0 T \sum_{\langle i,j\rangle}{n}_i{n}_j) U_I(s)$, whose instantaneous eigenstates are denoted by $\ket{\psi_i(s)}=U_I^{\dagger}(s)\ket{\phi_i}$. 

${H}_{I}(s)$ is simply a rotation of $\omega_0 T \sum_{\langle i,j\rangle}{n}_i{n}_j$ and naturally, $\ket{\Phi_S(s)}$ is restricted to the IS subspace iff $\ket{\Phi_I(s)}$ is in the instantaneous ground states of ${H}_{I}(s)$. Thus, the leakage out of the IS subspace is equivalent to the ground state adiabatic condition of ${H}_{I}(s)$ and now we can use the aforementioned adiabatic conditions to study the leakage. Applying the approximate version, it gives simply $\max_{s\in [0,1]}\Omega(s)/2\omega_0\ll 1$. This condition provides no restriction for $\Delta(s)$ and the quantum state will be frozen in the initial state (the empty set) as $\Omega\to 0$. Also, there is no quantitative standard for ``$\ll$'', e.g. we find experiments with $\max_{s\in [0,1]}\Omega(s)/2\omega_0\approx 0.1$ still suffer from the IS subspace leakage~\cite{Lukin2022}, which can be explained by Eq. \eqref{hopping}: large detuning will resonate with blockade, and lead to the leakage.

Applying the rigorous JRS condition, we note that the subspace degeneracy $m$ is the number of independent sets and $m=O(a^n)$ according to Ramsey theory~\cite{Ramsey} in King's graph. This leads to an exponential upperbound for running time $T$. Both of the usual adiabatic conditions give disappointing upperbounds.

\section{The classical minimum-degree greedy algorithm}
\begin{figure}[H]
  \centering
  \begin{minipage}{\columnwidth}
  \begin{algorithmic}[1]
    \Require Undirected graph $G=(V,E)$
    \Ensure Independent set $I$
    \State $I \gets \varnothing$, \ $U \gets V$
    \While{$U \neq \varnothing$}
      \State choose $v \in \arg\min_{u \in U} \deg_{G[U]}(u)$ \Comment{tie-break: smallest label}
      \State $I \gets I \cup \{v\}$
      \State $U \gets U \setminus \big(\{v\} \cup N_{G[U]}(v)\big)$
    \EndWhile
    \State \Return $I$
  \end{algorithmic}
  \end{minipage}
  \caption{Minimum-degree greedy for MIS.}
  \label{fig:mdg-mis}
\end{figure}

The complexity of this algorithm is $O(|V|+|E|)$, which is $O(n^2)$ for the Erd\H{o}s--R\'enyi graph.

\section{Numerical simulation of the quantum algorithm}

For the results in Fig. \ref{fig:4}, namely the quantum $O(n^2)$ algorithm for finding large ISs, we sampled 100 graphs for each data point. And we set $T=100$, $\omega_0=n^2$, $\Omega(s)= \sin(\pi s)$ and $\Delta(s)= \cos(\pi s)$. The tested graphs are the Erd\H{o}s--R\'enyi graph with $p=0.8$. 

For the results in Fig. \ref{fig:3}, namely the quantum $O(n)$ algorithm for finding size-2 ISs, we sampled 100 graphs for each data point. And we set $T=20/n$, $\omega_0=n^2$, $\Omega(s)= 1$ and $\Delta(s)= 0$. The tested graphs are the Erd\H{o}s--R\'enyi graph with $p=0.5$.

\end{document}